\def\..{\,\mathpunct{\ldotp\ldotp}} 
\newtheorem{theorem}{Theorem}
\newcounter{noqed}
\newcommand{\qed}{ \ifmmode\mbox{ }\fi\rule[-.05em]{.3em}{.7em}\setcounter{noqed}{0}}
\newenvironment{proof}[1][{}]{\noindent{\bf Proof#1. }\setcounter{noqed}{1}}{\ifnum\value{noqed}=1\qed\fi\par\medskip}
\title{Four Degrees of Separation, Really}
\author{\IEEEauthorblockN{Paolo Boldi \qquad Sebastiano
Vigna}\\\IEEEauthorblockA{Dipartimento di Informatica\\Universit\`a degli Studi
di Milano\\Italy}\thanks{Partially supported by a Yahoo!~faculty grant and by 
by the EU-FET grant NADINE (GA 288956).}
}
\begin{document}
\bibliographystyle{IEEEtran}
\maketitle

\begin{abstract}
We recently measured the average distance of users in the Facebook graph,
spurring comments in the scientific community as well as in the general
press~\cite{BBRFDS}.
A number of interesting criticisms have been made about the meaningfulness,
methods and consequences of the experiment we performed. In this paper
we want to discuss some methodological aspects that we deem important to underline in the form of
answers to the questions we have read in newspapers, magazines, blogs, or heard
from colleagues.
We indulge in some reflections on the
actual meaning of ``average distance'' and make a number of side observations
showing that, yes, $3.74$ ``degrees of separation'' are really
few.
\end{abstract}

\section*{Four degrees of separation}

In 2011, together with Marco Rosa, we developed a new tool for studying the
distance distribution of very large (unweighted) graphs, called
HyperANF~\cite{BRVH}:
this algorithm built on powerful graph compression techniques~\cite{BoVWFI} and on the idea of diffusive computation pioneered
in~\cite{PGFANF}. The new tool made it possible to accurately study the
distance distribution of graphs orders of magnitude larger than it was previously possible.
The work on HyperANF was presented at the 20th World-Wide Web Conference, in
Hyderabad (India), and Lars Backstrom happened to listen to the talk; he
was intrigued by the possibility of experimenting our software on the
Facebook graph and suggested a collaboration. 

Experiments were performed in the summer of 2011, resulting in the first
world-scale social-network graph-distance computations, using the entire
Facebook network of active users (721 million users, 69 billion friendship
links). The average distance (i.e., shortest-path length) observed was $4.74$, corresponding to $3.74$ intermediaries (or
``degrees of separation'', in Milgram's parlance).
These and other findings were finally
presented in~\cite{BBRFDS} and made public by Facebook through its technical
blog on November 19, 2011. Immediately after the announcement, the news  
appeared in the general press, starting from the New York
Times~\cite{MaSSYMFD}\footnote{Incidentally, with an off-by-one error, as
$4.74$ is the average \emph{distance}, whereas the average number of degrees
of separation is $3.74$ (see~\cite{BBRFDS}).} and soon spreading worldwide in
newspapers, blogs and forums.

A number of interesting criticisms have been made about the
meaningfulness, methods and consequences of the experiment we performed. In this paper
we want to discuss some methodological aspects that we deem important. We shall
consider such issues in an answer-to-question style, with the double aim of
replying to doubts and attacks and of stimulating new discussions and further
interest.

\section{Not all pairs are connected: how can the average distance be
even finite?}
\label{sec:dist}

If by ``average distance'' we mean ``average of the distances between all
pairs'', of course Facebook has an infinite average distance, as we know that
there is a very large connected component containing almost all ($99.9\%$)
nodes, but there are also some (few) unreachable pairs.

This is an interesting comment, as it shows an actual black hole in all the
literature: people studying social problems (starting with the 50s, at least)
had in mind very small groups, possibly groups that would fit one room
(actually, in some cases, just sitting around a table).
Or small communities. The very idea of ``unreachable'' was not part of the
picture.
In the famous paper by Travers and Milgram~\cite{TMESSWP}, the vast majority of 
postcards did not reach the target\footnote{It should be noted, as an aside,
that in Milgram's experiment the interrupted chains do not actually imply
unreachability, a point that will be better discussed later.}.
Nonetheless, the ``six degrees of separation'' idea came from the average distance ($5.4$ to $6.7$, depending on
the group) obtained in the experiment, computed \emph{just on
reachable pairs}.\footnote{Indeed, the authors of one of the first studies of the web as a
whole~\cite{BKMGSW} noted the same problem, and proposed the name \emph{average
connected distance}. We refrain, however, from using the word ``connected'' as
it somehow implies a bidirectional (or, if you prefer, undirected) connection. 
The notion of average distance between all pairs
is useless in a graph in which not all pairs are reachable, as it is necessarily
infinite, so no confusion can arise.}

We discuss here in some detail two possible mathematical solutions to this
problem---not only because they are interesting, but because we
want to urge researchers to take the problem into consideration more seriously,
and to remark to those objecting to the use of reachable pairs that
old results would be really stated differently if unreachable pairs were
correctly taken into account.

An obvious patch is to quote the average distance between reachable pairs, sided
by the percentage of reachable pairs, which should be considered as a sort of
\emph{confidence} on the measure. If the percentace of reachable pairs
is low, the average distance is telling us little. On a completely disconnected
graph, the average distance is 0, but with ``confidence'' $1/n$. On a
perfect match,\footnote{A \emph{perfect match} is an undirected 1-regular
graph, that is, a set of disconnected edges.} the average distance is $1/2$,
but the ``confidence'' is $2/n$ (in both cases, almost zero for large graphs).

Seen in this perspective, Milgram's experiment proposes an average distance of
$6.2$ but provides an incredibly low level of confidence---just
$22\%$,\footnote{Travers and Milgram's paper~\cite{TMESSWP} reports $29\%$, as
this is the percentage of chains that \emph{started and completed} with respect to those that \emph{started}. 
Some of the chains did not start at all, and we are
considering them as incomplete, which explains the slightly slower value we
are reporting.} whereas in our case we can claim confidence $99.9\%$ for our
value ($4.74$).

The problem is that we like to compare results, and comparing two pairs of
numbers can be difficult, if not impossible (see, e.g., the plethora of methods
used to combine somehow precision and recall in information retrieval).

A solution that does not show the latter drawback is to consider \emph{harmonic
means} when working with distances. We recall that the harmonic mean is the reciprocal of the mean of the
reciprocals. It is always smaller than the arithmetic mean, as it tends to give
less relevance to large outliers and more relevance to small values, and it is
used in a number of contexts\footnote{Incidentally, the HyperLogLog
counters~\cite{FFGH} used by HyperANF~\cite{BRVH}, the algorithm with which the
average distance of Facebook was computed, use the harmonic mean to perform stochastic
averaging.}.

The important feature of the harmonic mean is that if we stipulate that
$1/\infty=0$, it can take in $\infty$ as a perfectly valid distance. Its effect
is that of making the mean larger in a hyperbolic fashion. This is why Marchiori
and Latora~\cite{MaLHSW} proposed to consider the harmonic mean of
\emph{all} distances between distinct nodes\footnote{The fact that we do not
consider the distances $d(x,x)$ is essential, as otherwise the harmonic mean
becomes zero.}, which we call \emph{harmonic diameter} following
Fogaras~\cite{FogWSBW} (rather than ``average distance \emph{between reachable pairs}''), as a measure of tightness of a network. For instance, a disconnected graph has average distance zero, but infinite harmonic diameter; and a perfect match has average
distance $1/2$, but harmonic diameter $n-1$.

What happens if we switch from the average distance to the harmonic diameter? On
highly disconnected network, with many missing paths, we get a larger number. On
the LAW web site\footnote{\texttt{http://law.dsi.unimi.it/}} you can find the
basic statistics of several web-graph snapshots, and the harmonic diameter is 
always significantly larger than the
average distance between reachable pairs.

\begin{table*}[ht]
\caption{\label{tab:harmonicno2007} Harmonic diameter of the graphs
from~\cite{BBRFDS}.}
\begin{center}
\begin{tabular}{|c|r|r|r|r|r|}
	\hline
	& \multicolumn{1}{c|}{\texttt{it}} & \multicolumn{1}{c|}{\texttt{se}} & \multicolumn{1}{c|}{\texttt{itse}} & \multicolumn{1}{c|}{\texttt{us}} & \multicolumn{1}{c|}{\texttt{fb}} \\
	\hline
	\input scripts/harmonic.table
	\hline
\end{tabular}
\end{center}
\end{table*}

In the case of Facebook, the harmonic diameter is $4.59$---even smaller than the
average distance. The situation, however, is quite different if we make the same
computation with Milgrams' experiment and assume that incomplete chains
correspond to unreachable pairs:
overall, the harmonic mean is $18.29$, almost four times larger than the average distance. If we restrict to the
Nebraska random group (i.e., we avoid geographical or cultural clues), the harmonic mean is more than five times
larger. By this measure, the improvement described in~\cite{BBRFDS} is even more
impressive.

\begin{table}
\caption{The harmonic mean and the mean of all distances (including $\infty$
for broken chains) for the groups detailed in Travers and Milgram's
paper~\cite{TMESSWP}. Note the significantly lower value of the harmonic mean
for the Boston group.}
\centering
\begin{tabular}{|l|D{.}{.}{3}|c|}
\hline
\multicolumn{1}{|c|}{Group} & \multicolumn{1}{|c|}{Harmonic mean}
&\multicolumn{1}{|c|}{Median distance}\\
\hline
Nebraska random     & 	26.68&$\infty$ \\
Nebraska stockholders&	19.37&$\infty$\\
All Nebraska         &	22.40&$\infty$\\
Boston random        &	12.63&$\infty$\\
All                  &	18.29&$\infty$\\
\hline
\end{tabular}
\end{table}

The problem with the harmonic diameter is that even if it is a clearly and
sensibly defined mathematical feature, it deprives us from the ``degree of
separation'' metaphore. The fact that in 2007 the harmonic diameter of
\texttt{it} was more than $15\,000$ does not mean, of course, that you need to
pass through $15\,000$ friendship links!

Another possibility for taking into account infinite distances is to use the 
\emph{median of all distances} as a
measure of closeness. That is, we list in
increasing order the $n^2$ values of $d(x,y)$, and we take that of index
$\lfloor n^2/2\rfloor$ (numbering from zero). This number is significantly
larger than the average distance if several pairs are unreachable because the $\infty$ values at
the end of the list ``push'' the median to the right. Again, on
the LAW web site you can see that in several web graphs the median of all
distances is significantly larger than the average distance, as it takes into
account the existence of unreachable pairs. It is a good idea to complement the
median with the fraction of pairs within its value: in any case, we know that at
least $50\%$ of the pairs (of \emph{all} pairs, not just the reachable ones) are
within its value, which gives us a concrete handle.

The median of all distances for Facebook is 5 (and $92\%$ of all pairs is
within this distance). So, again, ``four degrees of separation''. Obviously, for
Milgram in all cases the median is $\infty$. So, using this measure we
progressed really a lot.

With the collaboration of Jure Leskovec we were able to 
compute similar measures for Horvitz and Leskovec's Messenger
experiment~\cite{LHPSVLIMN}: the 
average distance, $6.618$, has confidence $71.3\%$; the harmonic 
diameter is $8.935$, whereas the median distance is $7$, covering $78.7\%$ of
all pairs.\footnote{We cannot report statistical metadata such as the standard
error, because we were provided with already-aggregated breadth-first samples
only.} Note that these figures are due to the presence of isolated nodes, that
is, nodes that did not participate in any communication in the observed month: if the
graph is reduced to non isolated nodes, essentially all values collapse.
 
\section{The sample is biased, and anyway it just represents $10\%$ of
humanity!}

As a first consideration, we invite the reader to observe that there is no such
things as a ``uniform'' or ``unbiased'' sample of a graph. One can, of course,
sample the \emph{nodes} or the \emph{arcs} of a graph, and consider the induced
subgraph, but there is no guarantee that the induced subgraph preserves the properties of interest of the whole
graph---much more sophisticated strategies are necessary, and in any case, 
it must be proved beforehand that the selected strategy creates an induced
subgraph that is sufficiently similar to the whole graph (whatever notion of
``similar'' we want to take into account).

In any case, let us take a step back and look for a moment at the
conditions of Milgram's experiment:
\begin{itemize}
\item \emph{number of pairs examined}: 296;
\item \emph{sample of the population}: 100 United States citizens living in
Boston, 96 random United States citizens living in Nebraska, 100 stockholders
living in Nebraska;
\item \emph{completed chains}: $\approx 22\%$;
\item \emph{definition of link}: instructions to send the letter only to a
``first-name acquaintance''.
\end{itemize}

Our case:
\begin{itemize}
\item \emph{number of pairs examined}: 250 millions of billions;
\item \emph{sample of the population}: 721 million people spread in several
continents;
\item \emph{completed chains}: $\approx 99.8\%$;
\item \emph{definition of link}: sharing a friendship link on Facebook.
\end{itemize}

We realize, obviously, that Facebook is not a random sample, and that being on
Facebook implies already sharing a mindset, or certain areas of interest.
We are also aware of the digital divide problem (that introduces a strong
geopolitical and economical bias) and that there are links on Facebook between
people that never met each other in person (e.g., gamers).

On the other hand, a random sample of 96 people from Nebraska is not a random
sample of the world population, either. And, again, we will never know if some
letters in the experiment actually passed through, say, two pen pals who never met in person.
What a lot of people did not realize is that, essentially, the only thing we
know about how people were involved in Milgram's experiment is that the sender
judged that it had a ``first-name acquaintance'' with the
receiver. The link between sender and receiver might have been in some
cases even \emph{weaker} than sharing a friendship link of Facebook.

There is, moreover, another important factor to take into account: since there
will be many first-name acquaintances who are \emph{not} on Facebook (and
hence not Facebook friends) some short paths will be missing. These two
phenomena will likely, at least in part, balance each other; so, although we do not have
(and cannot obtain) a precise proof of this fact, we do not think we are
losing or gaining much in considering the notion of Facebook friend as a surrogate of
first-name friendship. 

All in all, we see a definite progress in stating that the world is small.
Thanks to Facebook, which is the largest ever-created database of human
relationships, we have been able to make Milgram's experiment  (or at least the
part of it that has to do with measuring shortest paths) much more concrete and
objectively measurable. 

Nonetheless, let us take another step back and consider, for a moment, the
genius of a man who approached a mind-boggling (even for us, now)
problem  on a worldwide scale armed with three hundred postcards and an
incredibly clever experiment. Obtaining a result almost unbelievably close to
what we obtained using a number of pairs that is \emph{fifteen orders of
magnitude larger}. One is tempted to draw a comparison with Galileo's
celebrated mental experiment in the \textit{Dialogo sopra i due massimi sistemi
del mondo}~\cite{GalDSDMSM}: you do not need an expensive lab to test the principle of
relativity---you just need a ship, some butterflies and some fish. Of course,
once you do it, an expensive lab to check it thoroughly is definitely not a bad
idea.

\section{You measured the average distance, but degrees of
separation are algorithmic}

Just after we disseminated our paper, we learned that an experiment was trying
to settle the ``degree of separation'' problem, which was ``still unresolved''
using Facebook.\footnote{\texttt{http://smallworld.sandbox.yahoo.com/}.} We
were, of course, quite surprised. While we certainly did not ``resolve''
anything, it was difficult to imagine an experiment at present time with a
larger sample or significantly more precise measurements.

The point is the distinction between ``routing'' and ``distance''.
Milgram's postcard were routed locally (each sender did not know whether the
recipient was the best choice to get to the destination, i.e., if it lay on a
shortest path to the destination).
Apparently, the question is still unresolved because by studying Facebook we have only computed
the ``topological'', not the ``algorithmic'' degrees of separation.

We believe, however, that this is a red herring. Reading
carefully Travers and Milgram's papers~\cite{MilSWP,TMESSWP}, it is clear
that the very purpose of the authors was to estimate the number of
intermediaries:
the postcards were just a tool, and the details of the paths they followed were
studied only as an artifact of the measurement process. In the words of Milgram, 
the problem was defined by ``given two individuals selected randomly from the
population, what is the probability that the minimum number of intermediaries
required to link them is 0, 1, 2, \dots, $k$?''. Said otherwise, Milgram was
interested in estimating the \emph{distance distribution} of the acquaintance
graph.

The interest in efficient routing lies more in the
eye of the beholder (e.g., the computer scientist) than in Milgram's: if he had
at his disposal an actual large database of friendship links and algorithms like
the ones we used, he would have dispensed with the postcards altogether. Thus,
the fact that we measured
\emph{actual} shortest paths between individuals, instead of the paths of a
greedy routing, is a definite progress.
Routing is an interesting computer-science (and sociological) problem, but it
had little or no interest for Milgram---actually, the main interest in the routing process was
understanding the convergence of paths. From the paper: 
\begin{quote}
The theoretical machinery needed to deal with social
networks is still in its infancy. The empirical technique of this research has
two major contribution to make to the development of that theory. First it
sets an upper bound on the minimum number of intermediaries required
to link widely separated Americans. Since subjects cannot always foresee the
most efficient path to a targer, our trace procedure must inevitably produce
chains longer than those generated by an accurate theoretical model
which takes full account of all paths emanating from an individual.
\end{quote}

That said, the results obtained in Milgram's experiment are even more stunning
because the average routing distance they computed (with the provisos about
uncompleted chains discussed above) is so close to the average shortest-path
length. The latter observation seems to suggest that human beings are extremely
good at routing, so good that they almost route messages along the shortest possible
path. However, taking uncompleted paths into consideration gives a slightly
different twist to this remark: it seems that when someone felt confident
enough to continue the experiment, (s)he did so almost in the best possible way;
but more often than not, the experiment was stopped probably because
the message arrived at an individual that did not know how to route it further
efficiently.

Apart for the attempts to measure the routing distance in real-world social
graphs, there is an ever increasing focus on developing a theory of distributed
efficient routing on small worlds, starting from Kleinberg's intriguing notion
of navigability~\cite{KleNSW,KleSWP}; this is however outside of the scope of
our paper.

\section{Just add a few links here and there and we'll all be at one degree of
separation}

Another, closely related, question is: ``We have seen that the degree of
separation has constantly decreased since 2008, reaching its current value. What can we expect for the future?''

To answer the above comment/question, 
notice that the average distance is
\[
\sum_{k>0}kP_k /r,
\]
where $P_k$ is the number of pairs at distance exactly $k$ and $r$ is the
number of reachable pairs, which is $n^2$ if and only if the graph is strongly
connected. Of course, if we have bounds $B_k\geq P_k$ for some $1\leq
k\leq\ell$, it is immediate to see that, if $\sum_{k=1}^{\ell -1 }B_k\leq r$
then
\begin{equation}
\label{eqn:bound}
\sum_{k>0}kP_k \geq \sum_{k=1}^{\ell-1} kB_k + \ell\Bigl(r -
\sum_{k>0}B_k\Bigr). 
\end{equation}
Now, depending on how much you want to consider a graph similar to the Facebook
graph described in~\cite{BBRFDS}, there are many ways to generate some $B_k$'s.

\paragraph{First bound (depending on $n$, $m$ and $D$).}
There are instrinsic bounds on the number
of short paths you can generate when the number of neighbours of a node is limited.
The simplest observation is that (letting $D$ be the maximum degree and $m$ be
the number of arcs in the graph, i.e., twice the number of edges) you cannot
have more than $m$ pairs at distance one, $mD$ pairs at distance 2,
and so on; more precisely, we can set $B_k=mD^{k-1}$,
getting (from (\ref{eqn:bound})) the lower bound
\[
\sum_{k>0}kP_k\geq m + 2mD + 3(r - m - mD)
\]
provided that $m + mD\leq r$; in the case of Facebook ($D=5000$,
$n\approx 721\times 10^6$, $r=5\times 10^{17}$, $m\approx 69\times 10^9$) the
inequality $m+mD\leq r$ is satisfied and the lower bound obtained is $\approx
2.999$.
In other words, no graphs with the same number of nodes, arcs and maximum outdegree of the graph we considered can have an average distance smaller than
$2.999$.


\paragraph{Second bound (depending on the degree sequence).}
To improve over the previous trivial bound, we can use the actual degree
distribution.\footnote{The degree distribution is publicly available as part of
the dataset associated with~\cite{BBRFDS}.} This is a bit like answering to the
question: what if some omniscent being ``rewired'' Facebook in an optimised way
to reduce the average distance as much as possible, but leaving each user with
its current number of friends? Let us first notice that $P_2$ can be bounded by
$\sum_x d(x)^2$, which, being the sum of entries of the square of the adjacency
matrix, is an upper bound for the number of pairs at distance 2. Providing a
good bound for $P_3$ is slightly more difficult:

\begin{theorem}
\label{teo:P3}
Let $d_0\geq d_1\geq \dots d_{n-1}$ be the degree sequence of the graph, $s =
\sum_{i=0}^{n-1} d_i^2$ and define, for every $t$,
\[
	\delta(t)=\sum_{i=0}^{d_t-1} d_i.
\]
Then $P_3$ (the number of pairs of nodes at distance exactly 3) can be bounded
by
\[
	P_3 \leq \sum_{k=0}^{\ell} d_k \delta(k) + d_{\ell+1} \Bigl( s -
	\sum_{k=0}^{\ell}\delta(k) \Bigr)
\]
where $\ell$ is the greatest integer such that
$\sum_{k=0}^{\ell}\delta(k)< s$.
\end{theorem}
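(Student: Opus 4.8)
The plan is to reduce the estimate on $P_3$ to a small fractional-knapsack problem whose variables are the numbers $p_z=\#\{x : d(x,z)=2\}$ of nodes at distance exactly $2$ from each node $z$. First I would bound, for a fixed node $x$, the number of nodes at distance $3$ from it. On any shortest path $x,\dots,z,y$ of length $3$ the penultimate node $z$ satisfies $d(x,z)=2$ and $z\sim y$, so every $y$ with $d(x,y)=3$ is a neighbour of some $z$ with $d(x,z)=2$; therefore
\[
\#\{y : d(x,y)=3\}\ \le\!\!\sum_{z\,:\,d(x,z)=2}\!\! d(z).
\]
Summing this over $x$ and exchanging the order of summation (using $d(x,z)=d(z,x)$) gives $P_3\le\sum_z d(z)\,p_z$.

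Next I would record two constraints on the vector $(p_z)$. Summing over $z$, $\sum_z p_z$ is exactly the number $P_2$ of ordered pairs at distance $2$, so the bound $P_2\le s$ recalled just before the theorem yields $\sum_z p_z\le s$. Moreover a node at distance $2$ from $z$ is a neighbour of a neighbour of $z$, so $p_z\le\sum_{w\sim z}d(w)$; since $z$ has exactly $d(z)$ neighbours and the sum of any $d(z)$ of the $d_i$'s cannot exceed the sum of the $d(z)$ largest of them, $p_z\le\sum_{i=0}^{d(z)-1}d_i$. If $t$ denotes the position of $z$ in the sorted degree sequence, so that $d(z)=d_t$, the right-hand side is exactly $\delta(t)$.

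It remains to maximise $\sum_{t=0}^{n-1}d_t\,x_t$ over all reals $x_t$ with $0\le x_t\le\delta(t)$ and $\sum_t x_t\le s$; this is an upper bound for $\sum_z d(z)\,p_z$, hence for $P_3$. Because the objective coefficients $d_0\ge d_1\ge\cdots\ge d_{n-1}$ are non-increasing, the greedy (fractional-knapsack) assignment is optimal: by the standard exchange argument, whenever $x_{t'}>0$ and $x_t<\delta(t)$ for some $t<t'$, moving mass from $t'$ to $t$ cannot decrease the objective since $d_t\ge d_{t'}$, and a deficit $\sum_t x_t<s$ can always be absorbed by raising some unsaturated $x_t$. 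Hence the optimum is $x_k=\delta(k)$ for $k\le\ell$, $x_{\ell+1}=s-\sum_{k=0}^{\ell}\delta(k)$, and $x_k=0$ for $k>\ell+1$, where $\ell$ is the largest index with $\sum_{k=0}^{\ell}\delta(k)<s$; substituting gives exactly the claimed inequality.

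I expect the two counting inequalities and the knapsack exchange argument to be routine; the one step needing care is the translation of the per-node cap into $p_z\le\delta(\mathrm{rank}(z))$ and the recognition that, together with the global budget $\sum_z p_z\le s$, it forms a fractional knapsack with non-increasing profits, which is precisely what makes the greedy assignment optimal. One should also check the implicit hypothesis $\sum_{k=0}^{n-1}\delta(k)\ge s$ that makes $\ell$ well defined with $\ell+1\le n-1$; this is elementary and holds for the graphs of interest.
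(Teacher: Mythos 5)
Your proof is correct and follows essentially the same route as the paper's: both bound $P_3$ by $\sum_z d(z)\cdot(\text{number of length-2 objects at }z)$, cap each such count by $\delta(\mathrm{rank}(z))$ and their total by $s$, and then extremize greedily (the paper phrases the last step as elementwise domination of two lists rather than as a fractional knapsack, and decomposes a length-3 path at its second rather than its penultimate node, but these are cosmetic differences). Your explicit flag of the corner case $\ell+1\le n-1$ is a detail the paper glosses over, and it does not affect the substance.
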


\begin{proof}
We can bound $P_3$ from above by counting the number $p$ of tuples
$(u_i,v_i,w_i,z_i)$ corresponding to paths of length 3. Let
$V=\{v_0,\dots,v_{k-1}\}$ be the set of nodes appearing as second component in
at least one such tuple, sorted by non-increasing node degree; clearly $p\leq
d(v_0)\pi(v_0)+\dots+d(v_{k-1})\pi(v_{k-1})$ where $d(x)$ is as usual the degree
of $x$ and $\pi(x)$ is the number of paths of length 2 starting from $x$: this
is because every single path of length 3 of the form $(-,v_i,-,-)$ is obtained
by choosing a neighbor of $v_i$ and a path of length 2 leaving from $v_i$.

Observe that $\pi(v_0)+\dots+\pi(v_{k-1})$ cannot be larger than $s$
(because the latter is an upper bound to the number of paths of length 2 in the graph).
Now, of course, for every $t=0,\dots,k-1$, $d(v_t)\leq d_t$, so $p\leq
d_0\pi(v_0)+\dots+d_{k-1}\pi(v_{k-1})$; it is convenient to think of the latter
as a summation of a list $L$ of length
$s\geq\pi(v_0)+\dots+\pi(v_{k-1})$, where 
$d_0$ occurs $\pi(v_0)$ times, $d_1$ occurs $\pi(v_1)$ times etc., and at the
end of the list $0$ occurs enough times to reach the desired length.

Now $\pi(v_t)$ can
be bounded from above by the number of paths of length 2 leaving from a node of
degree $d_t$. But the latter can be obtained by choosing at the first step the
$d_t$ nodes with largest degree, and summing up their degree; that is,
$\pi(v_t)\leq \delta(t)$. So we can safely substitute the above list $L$ with
another list $L'$ of the same length where $d_0$ is repeated $\delta(0)\geq
\pi(v_0)$ times, $d_1$ is repeated $\delta(1)\geq \pi(v_1)$ times etc. The
resulting list $L'$ dominates $L$ elementwise, hence the thesis.
\end{proof}

Plugging $B_1=m$, $B_2=\sum_{i=0}^{n-1} d_i^2$ and $B_3$ as in
Theorem~\ref{teo:P3}, and using the actual degree sequence of Facebook, we
obtain $\approx 3.6$. Thus, Facebook is essentially just one step (distance or degree
doesn't matter) away from the best possible, given that every individual keeps the
current number of friends.

\section{It's just because of the nodes with very high degree that we observe
such a low value}

Since the first studies on the structure of complex graphs~\cite{BAJPDWWW}, and
in particular of social networks, the degree distributions have been a central topic on which
many authors focused, concluding that both in- and out-degrees exhibit a
heavy-tailed distribution: this fact implies that there are many nodes whose
degree largely exceeds the average. It is a widely assumed tenet that those
nodes, sometimes referred to as \emph{hubs}, represent a sort of ``social glue''
that keeps the whole network structure together and that shortcut
friendship paths.
In the case of social networks, such as Twitter or Facebook, hubs are superstars
like Lady Gaga or Barack Obama, whose account often do not even correspond to
real persons. 

But, is this the case? In our analysis of the Facebook graph we excluded
\emph{pages} (the accounts that people may ``like''), and standard accounts
have a hardwired limit of $5\,000$ friends. Nonetheless, we cannot rule out the
possibility that there are some fake celebrity accounts remaining in the graph we studied.

The general question we are asking can be restated as follows: take a social
network and start removing the nodes of largest degrees; how much does the distribution of
distances change? in particular: how does the average distance change
(presumably: increase)? 
We considered this question in a previous paper~\cite{BRVRSN}
(see also~\cite{BRVRSWSG}), where we actually studied the more general problem of
which removal strategies are more disruptive under the viewpoint of distance distributions.

We report an anticipation of a subset of the results of~\cite{BRVRSWSG}, as they
suggest that high-degree node removal is not going to cause drastic changes in
the structure of the network. We show results for a small\footnote{Similar
results have been obtained with a lesser degree of precision on a snapshot of a
100 million pages in~\cite{BRVRSN}; computations are underway to obtain
high-precision data similar to what we report here about the smaller snapshot,
and the results will be included in the final version of this paper.} snapshot of the Indian web (\texttt{.in}), for the Hollywood
co-starship graph, for a snapshot of the LiveJournal network kindly provided by the authors of~\cite{CKLCSN}, and a snapshot of the Orkut network kindly provided by the authors of~\cite{MMKMAOSN}.\footnote{All these datasets are public and available at \texttt{http://law.dsi.unimi.it/}. The identifiers of the datasets are \texttt{in-2004}, \texttt{hollywood-2011}, \texttt{ljournal-2008} and
\texttt{orkut-2007}.}

The results we obtained are the following. Removing largest-degree nodes does
affect the average distance on web graphs: after the removal of $30\%$ of the
arcs\footnote{We emphasize that we remove nodes (in decreasing order of their
in-degree) and all incident edges, but count how many \emph{arcs} are removed, because it
is the number of deleted arcs that determines the expected loss in
connectivity. We invite the reader to consult~\cite{BRVRSN} for more details.}
the average distance gets increased of about $24\%$.
Nonetheless, the same removal strategy seems to have a weaker impact on genuine social networks: under
the same condition, the increase in average distance ranges between $8\%$ and
$11\%$ (see Table~\ref{tab:avgdi}).

\begin{table}
\caption{\label{tab:avgdi}Change in average distance of web and social graphs
after removing the largest (in-)degree nodes. The removal process is stopped
when the number of arcs removed reaches the $10\%$ and $30\%$.}
\centering
\begin{tabular}{|l|D{.}{.}{2}|D{.}{.}{2}@{\hspace{2mm}}r|D{.}{.}{2}@{\hspace{2mm}}r|}
\hline Graph 	& \multicolumn{1}{|c|}{original} & \multicolumn{2}{|c|}{$10\%$} &
	\multicolumn{2}{|c|}{$30\%$}
	\\ \hline
	\texttt{.in} 
		& 15.34 & 16.11 & $(+5.0\%)$ & 18.98 & $(+23.7\%)$ \\
	\hline
	Hollywood
		& 3.92 & 4.02 & $(+2.5\%)$ & 4.23 & $(+7.9\%)$ \\
	LiveJournal
		& 5.99 & 6.15 & $(+2.7\%)$ & 6.55 & $(+9.3\%)$ \\
	Orkut
		& 4.21 & 4.43 & $(+5.2\%)$ & 4.67 & $(+10.9\%)$ \\
	\hline
\end{tabular}
\end{table}

Nonetheless, we are actually missing a very important point: in the social
networks we studied, removing 30\% of the arcs actually does not change the
percentage of reachable pairs, whereas in web graphs the percentage (which is
already lower) is reduced by a half. As we discussed in
Section~\ref{sec:dist}, the average distance turns out again to be a very rough
and unrealiable measure when the number of unreachable pairs is large.

Thus, in Table~\ref{tab:avgdh} we show what happens to the harmonic diameter.
The results show that the increase for social networks is very modest (less than
$20\%$ after the removal of as many as the $30\%$ of the arcs), whereas for web
graphs the harmonic diameter almost triplicates! This confirms again that the
harmonic diameter is more reliable value to be associated to the ``tightness''
or ``connectedness'' of a network.

\begin{table}
\footnotesize
\caption{\label{tab:avgdh}Change in harmonic diameter of web and social graphs
after removing the largest (in-)degree nodes. The removal process is stopped
when the number of arcs removed reaches the $10\%$ and $30\%$.}
\centering
\begin{tabular}{|l|D{.}{.}{2}|D{.}{.}{2}@{\hspace{2mm}}r|D{.}{.}{2}@{\hspace{2mm}}r|}
\hline Graph 	& \multicolumn{1}{|c|}{original} & \multicolumn{2}{|c|}{$10\%$} &
	\multicolumn{2}{|c|}{$30\%$}
	\\ \hline
	\texttt{.in} 
		& 32.26 & 47.03 & $(+45.8\%)$ & 87.68 & $(+171.8\%)$ \\
	\hline
	Hollywood
		& 4.08 & 4.12 & $(+1.0\%)$ & 4.40 & $(+7.8\%)$ \\
	LiveJournal
		& 7.36 & 7.74 & $(+5.2\%)$ & 8.67 & $(+17.8\%)$ \\
	Orkut
		& 4.06 & 4.33 & $(+6.7\%)$ & 4.61 & $(+13.6\%)$ \\
	\hline
\end{tabular}
\end{table}

We remark that LiveJournal and Orkut are people-to-people friendship networks as
Facebook (note, however, that LiveJournal is directed).
We believe that the resistance to high-degree removal is actually a common
phenomenon in such networks, which prompts us to conjecture that similar
node-removal prodedures will not change Facebook average distance or harmonic
diameter significantly, albeit we have no empirical data to support our
hypothesis at this point.

Actually, a more general conclusion obtained in the cited paper~\cite{BRVRSN}
is that social networks seem very robust to node removal, and we could not find
any node order that determined radical changes in the distance distribution. 
This observation leaves an intriguing question still open to debate: if hubs are
not the inherent cause behind short distances, then what is the \emph{real}
reason of this phenomenon?

\section{Are you saying that Facebook reduced the average distance between
people?}

Some of the comments in the general press took the outcomes of our experiments
as an evidence that online social networks (such as Facebook) reduced the
average distance between people; of course, this was not the purpose (neither
the content) of the experiment and in any case there is no direct way to know if
this is true or not, because our measurements \emph{are performed on Facebook}. We can see,
however, that the distance between Facebook users constantly decreased over
time:
it used to be $5.28$ in 2008, $5.06$ in 2010 and $4.74$ in our most recent dataset. Whether this decrease is \emph{due} to Facebook, or whether it
simply Facebook reflecting better and better the situation in the ``real world''
is hard to say. In the former case, as someone suggested, we would be observing
a reduction in path lengths due probably to the presence of \emph{weak
ties}~\cite{GraSWT} that hardly correspond to a real friendship relation and
would probably not even show up in a non--electronically-mediated environment.

Understanding how online social networks are changing our way of interacting,
communicating and thinking is absolutely beyond the scope of our paper, whose
aim was much humbler and certainly not as far-reaching. We believe,
however, that giving a concrete and realistic explanation of what is going on
requires a co-ordinated effort and calls for an interdisciplinary endeavor,
putting together sociology, psychology, computer science and mathematics. This
is, we think, one of the most important challenges for people working in these
disciplines, with yet unknown consequences of philosophical, social and even
economical value.

\bibliography{biblio,fb,fdsr,law}

\hyphenation{ Vi-gna Sa-ba-di-ni Kath-ryn Ker-n-i-ghan Krom-mes Lar-ra-bee
  Pat-rick Port-able Post-Script Pren-tice Rich-ard Richt-er Ro-bert Sha-mos
  Spring-er The-o-dore Uz-ga-lis }\hyphenation{ Vi-gna Sa-ba-di-ni Kath-ryn
  Ker-n-i-ghan Krom-mes Lar-ra-bee Pat-rick Port-able Post-Script Pren-tice
  Rich-ard Richt-er Ro-bert Sha-mos Spring-er The-o-dore Uz-ga-lis }
\begin{thebibliography}{10}
\providecommand{\url}[1]{#1}
\csname url@samestyle\endcsname
\providecommand{\newblock}{\relax}
\providecommand{\bibinfo}[2]{#2}
\providecommand{\BIBentrySTDinterwordspacing}{\spaceskip=0pt\relax}
\providecommand{\BIBentryALTinterwordstretchfactor}{4}
\providecommand{\BIBentryALTinterwordspacing}{\spaceskip=\fontdimen2\font plus
\BIBentryALTinterwordstretchfactor\fontdimen3\font minus
  \fontdimen4\font\relax}
\providecommand{\BIBforeignlanguage}[2]{{%
\expandafter\ifx\csname l@#1\endcsname\relax
\typeout{** WARNING: IEEEtran.bst: No hyphenation pattern has been}%
\typeout{** loaded for the language `#1'. Using the pattern for}%
\typeout{** the default language instead.}%
\else
\language=\csname l@#1\endcsname
\fi
#2}}
\providecommand{\BIBdecl}{\relax}
\BIBdecl

\bibitem{BBRFDS}
L.~Backstrom, P.~Boldi, M.~Rosa, J.~Ugander, and S.~Vigna, ``Four degrees of
  separation,'' Arxiv preprint arXiv:1111.4570, 2012, accepted at ACM Web
  Science 2012.

\bibitem{BRVH}
P.~Boldi, M.~Rosa, and S.~Vigna, ``{HyperANF}: Approximating the neighbourhood
  function of very large graphs on a budget,'' in \emph{Proceedings of the 20th
  international conference on World Wide Web}, S.~Srinivasan, K.~Ramamritham,
  A.~Kumar, M.~P. Ravindra, E.~Bertino, and R.~Kumar, Eds.\hskip 1em plus 0.5em
  minus 0.4em\relax ACM, 2011, pp. 625--634.

\bibitem{BoVWFI}
P.~Boldi and S.~Vigna, ``The {W}eb{G}raph framework {I}: {C}ompression
  techniques,'' in \emph{Proc. of the Thirteenth International World Wide Web
  Conference (WWW 2004)}.\hskip 1em plus 0.5em minus 0.4em\relax Manhattan,
  USA: ACM Press, 2004, pp. 595--601.

\bibitem{PGFANF}
C.~R. Palmer, P.~B. Gibbons, and C.~Faloutsos, ``Anf: a fast and scalable tool
  for data mining in massive graphs,'' in \emph{KDD '02: Proceedings of the
  eighth ACM SIGKDD international conference on Knowledge discovery and data
  mining}.\hskip 1em plus 0.5em minus 0.4em\relax New York, NY, USA: ACM, 2002,
  pp. 81--90.

\bibitem{MaSSYMFD}
J.~Markoff and S.~Sengupta, ``Separating you and me? $4.74$ degrees,''
  \emph{The New York Times}, no. 325, p.~B1, 21 November 2011.

\bibitem{TMESSWP}
J.~Travers and S.~Milgram, ``An experimental study of the small world
  problem,'' \emph{Sociometry}, vol.~32, no.~4, pp. 425--443, 1969.

\bibitem{BKMGSW}
A.~Broder, R.~Kumar, F.~Maghoul, P.~Raghavan, S.~Rajagopalan, R.~Stata,
  A.~Tomkins, and J.~Wiener, ``Graph structure in the {W}eb: experiments and
  models,'' \emph{Computer Networks}, vol.~33, no. 1--6, pp. 309--320, 2000.

\bibitem{FFGH}
P.~Flajolet, {\'E}.~Fusy, O.~Gandouet, and F.~Meunier, ``Hyper{L}og{L}og: the
  analysis of a near-optimal cardinality estimation algorithm,'' in
  \emph{Proceedings of the 13th conference on analysis of algorithm (AofA 07)},
  2007, pp. 127--146.

\bibitem{MaLHSW}
M.~Marchiori and V.~Latora, ``Harmony in the small-world,'' \emph{Physica A:
  Statistical Mechanics and its Applications}, vol. 285, no. 3-4, pp. 539 --
  546, 2000.

\bibitem{FogWSBW}
D.~Fogaras, ``Where to start browsing the web?'' in \emph{Innovative Internet
  Community Systems, Third International Workshop, IICS 2003}, ser. Lecture
  Notes in Computer Science, vol. 2877.\hskip 1em plus 0.5em minus 0.4em\relax
  Springer, 2003, pp. 65--79.

\bibitem{LHPSVLIMN}
J.~Leskovec and E.~Horvitz, ``Planetary-scale views on a large
  instant-messaging network,'' in \emph{Proceeding of the 17th international
  conference on World Wide Web}.\hskip 1em plus 0.5em minus 0.4em\relax ACM,
  2008, pp. 915--924.

\bibitem{GalDSDMSM}
G.~Galilei, \emph{Dialogo sopra i due massimi sistemi del mondo}.\hskip 1em
  plus 0.5em minus 0.4em\relax Landini, 1632.

\bibitem{MilSWP}
S.~Milgram, ``The small world problem,'' \emph{Psychology Today}, vol.~2,
  no.~1, pp. 60--67, 1967.

\bibitem{KleNSW}
J.~M. Kleinberg, ``Navigation in a small world,'' \emph{Nature}, vol. 406, no.
  6798, pp. 845--845, 2000.

\bibitem{KleSWP}
------, ``The small-world phenomenon: an algorithm perspective,'' in
  \emph{Proceedings of the 32nd ACM symposium on theory of computing}.\hskip
  1em plus 0.5em minus 0.4em\relax ACM, 2000, pp. 163--170.

\bibitem{BAJPDWWW}
A.-L. Barab\'asi, R.~Albert, H.~Jeong, and G.~Bianconi, ``Power-law
  distribution of the {World Wide Web},'' \emph{Science}, vol. 287, p. 2115a,
  2000.

\bibitem{BRVRSN}
P.~Boldi, M.~Rosa, and S.~Vigna, ``Robustness of social networks: Comparative
  results based on distance distributions,'' in \emph{Social Informatics, Third
  International Conference, SocInfo 2011}, ser. Lecture Notes in Computer
  Science, vol. 6894.\hskip 1em plus 0.5em minus 0.4em\relax Springer, 2011,
  pp. 8--21.

\bibitem{BRVRSWSG}
------, ``Removal strategies for web and social graphs,'' 2012, submitted for
  publication.

\bibitem{CKLCSN}
F.~Chierichetti, R.~Kumar, S.~Lattanzi, M.~Mitzenmacher, A.~Panconesi, and
  P.~Raghavan, ``On compressing social networks,'' in \emph{KDD '09:
  Proceedings of the 15th ACM SIGKDD international conference on Knowledge
  discovery and data mining}.\hskip 1em plus 0.5em minus 0.4em\relax New York,
  NY, USA: ACM, 2009, pp. 219--228.

\bibitem{MMKMAOSN}
A.~Mislove, M.~Marcon, K.~P. Gummadi, P.~Druschel, and B.~Bhattacharjee,
  ``{Measurement and Analysis of Online Social Networks},'' in
  \emph{Proceedings of the 5th ACM/Usenix Internet Measurement Conference
  (IMC'07)}, San Diego, CA, October 2007.

\bibitem{GraSWT}
M.~Granovetter, ``{The Strength of Weak Ties},'' \emph{The American Journal of
  Sociology}, vol.~78, no.~6, pp. 1360--1380, 1973.

\end{thebibliography}

\end{document}